\documentclass{llncs}
\pagestyle{plain}
\usepackage{color}
\usepackage{graphicx}
\usepackage{amsfonts}
\usepackage{latexsym,amssymb,amsmath}
\usepackage{todonotes}
\usepackage[ruled]{algorithm}
\usepackage[noend]{algpseudocode}
\algdef{SE}[DOWHILE]{Do}{doWhile}{\algorithmicdo}[1]{\algorithmicwhile\ #1}
\usepackage{comment}
\usepackage{epstopdf}
\usepackage{epsfig}
\usepackage{soul}
\usepackage{tikz}
\setcounter{tocdepth}{3}
\setcounter{secnumdepth}{3}
\usepackage{cleveref}

\begin{document}

\title{Bike Assisted Evacuation on a Line of Robots with S/R Communication Faults\\}
\author{
Khaled Jawhar\inst{1}
\and
Evangelos Kranakis\inst{1}\inst{2}
}

\institute{
School of Computer Science, Carleton University, Ottawa, Ontario, Canada.
\and
Research supported in part by NSERC Discovery grant.
}
\maketitle
\hspace{-1cm}
\begin{abstract}
Two autonomous mobile robots and a non-autonomous one, also called bike, are placed at the origin of an infinite line. The autonomous robots can travel with maximum speed $1$. When a robot rides the bike its speed increases to $v>1$, however only exactly one robot at a time can ride the bike and the bike is non-autonomous in that it cannot move on its own. An Exit is placed on the line at an unknown location and at distance $d$ from the origin. The robots have limited communication behavior; one robot is a sender (denoted by S) in that it can send information wirelessly at any distance and receive messages only in F2F (Face-to-Face), while the other robot is a receiver (denoted by R) in that it can receive information wirelessly but can send information only F2F. The bike has no communication capabilities of its own. We refer to the resulting communication model of the ensemble of the two autonomous robots and the bike as S/R. 

Our general goal is to understand the impact of the non-autonomous robot in assisting the evacuation of the two autonomous faulty robots. Our main contribution is to provide a new evacuation algorithm that enables both robots to evacuate from the unknown Exit in the S/R model. We also analyze the resulting evacuation time as a function of the bike's speed $v$ and give upper and lower bounds on the competitive ratio of the resulting algorithm for the entire range of possible values of $v$. 

\vspace{0.5cm}

\noindent
{\bf Key words and phrases.} Faults, Line, Robots, Search, Receiver, Sender, S/R Communication Model.
\end{abstract}

\section{Introduction}

Evacuation (also known as group search) is similar to search except that it involves many entities which cooperate in order to find an unknown exit. There are plenty of applications related to search and evacuation in a distributed system such as data mining, crawling, and surveillance which makes it an area of interest. Many researchers have studied the linear search and evacuation problems with one or more robots moving at different speeds. Most investigations in this domain highlight algorithms that achieve the best possible upper and lower bounds. There are several factors that may affect the solution of the linear search and evacuation problems. For example, the robot(s) may know how far the exit is from the origin but may not be aware of the direction (to the left or right of the origin), the robot(s) may know the direction but not the distance from the origin to the exit, or in a more complicated case the distance and the direction are unknown to the robots. In our present study, the distance and the direction are unknown and there is a communication fault affecting both robots.

\subsection{Preliminaries, Notation and Terminology}

The system consists of two autonomous mobile agents (referred to as robots) and non-autonomous one (referred to as bike). The two robots and the bike together are sometimes referred to as the ensemble. The two robots have different identities and can see each other when they are situated at the same position. The search domain is the infinite line and it is bidirectional in that the robots can move in either direction on the line without this affecting their speeds. The two autonomous mobile agents can move around on their own with maximum speed $1$ or ride the bike with speed $v>1$. The robots can change direction at any time as specified by an algorithm; further, this change in direction of a robot  is instantaneous, with or without the bike). An autonomous robot riding the bike is also referred to as {\em biker}, otherwise it also referred to as {\em hiker}. 

An evacuation algorithm is given by the complete description of the trajectories traced by the two bikers until they both find the exit and evacuate.  We are interested in evacuation algorithms which achieve the best competitive ratio which is defined as the optimal ratio of the evacuation time achieved by an algorithm which enables both autonomous robots to evacuate from the exit when its location is unknown divided by the evacuation time needed if all the robots know where the exit is.

The bike is not autonomous and cannot move and/or communicate on its own and thus plays only the role of assistant in order to speed up the search. The autonomous robot using the bike has an advantage in that it can move with speed $v > 1$ which is of course faster than its walking speed $1$. Since the bike is a limited resource, in that it can be used by only one robot at a time, it must be shared by the robots in the sense that they must take turns using it. This is easily seen to be an advantage since it will ultimately improve the overall evacuation time and hence also the competitive ratio of the ensemble.

An important aspect in our algorithms will be ``bike switching'', by which we mean changing the rider of the bike. We will assume throughout the paper that bike switching between robots is instantaneous and at no time cost. Note that the robots may recognize the presence of the bike when they are at the same location as the bike.

We employ a communication model referred to as S/R model (this model was first proposed on the infinite line by~\cite{czyzowicz2021groupevac} and on a unit disk by~\cite{georgiou2022evacuation}) in which the robots have certain limited communication behavior and may communicate throughout the execution of the algorithms in the following way. Face-To-Face (F2F) (resp. Wireless) communication can take place only when the robots are co-located (resp. at any distance). The sender can only send information wirelessly and receive information F2F while the receiver can only receive information wirelessly and send information F2F. A typical communication exchange may involve, e.g., ``exit is found'', ``bike released'', ``switch bike'', etc. Note that the robots are endowed with pedometers and have computing abilities so that they can deduce the location of the other robot and/or the bike from relevant communications exchanged. Moreover, their ability to communicate F2F never fails. The bike itself has no communication capabilities on its own. In a way we think of the S/R model as describing a type of fault and in the sequel, we refer to the ensemble of the two robots and the bike as the S/R (communication) model. 

Throughout the paper we will be using S and R to denote the sender and receiver respectively (which we often refer to as bikers) and $B$ to denote the bike (non-autonomous robots). The robots are equipped with pedometers which can be used either when walking or riding the bike and are identical in all their capabilities (locomotion). The origin of the real line will be at the point $x=0$ on the $x$-axis and this will also be the starting location of the robots and the bike. The adversary may place the exit at either of the points $\pm d$, where $d > 0$ will denote the unknown distance of the exit from the origin. In addition, $v > 1$ will denote the speed that a biker can attain when riding the bike.

The following lemma is shown in \cite{jawharkranakis21} can be proved easily using the standard analysis of bike-sharing and will be used extensively in the sequel.
\begin{lemma}
\label{lm:bikeshare}
Assume two robots of maximum speed $1$ are sharing a bike of speed $v > 1$. Together the ensemble can cover a segment of length $d$ in time $\frac{d (v+1)}{2 v}$. In fact the ensemble travels with speed equal to $\frac {2 v}{v+1}$. Moreover, these time and speed values are optimal.
\end{lemma}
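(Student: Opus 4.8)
The plan is to establish the stated time by proving matching upper and lower bounds, from which the effective speed $\frac{2v}{v+1} = d / \left(\frac{d(v+1)}{2v}\right)$ and its optimality follow immediately. For the upper bound I would exhibit the symmetric \emph{equal-split} (leapfrog) schedule; for the lower bound I would argue that no schedule in which both robots advance a distance $d$ can finish sooner.

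For the upper bound, suppose both robots must advance from one end of the segment to the other. Let the first robot ride the bike over the initial half, i.e.\ a distance $d/2$, then drop the bike and walk the remaining $d/2$; let the second robot walk the initial $d/2$, pick up the waiting bike, and ride the final $d/2$. I would first check feasibility: the bike is dropped at the midpoint at time $\frac{d}{2v}$, which is strictly less than the time $\frac{d}{2}$ at which the walking robot reaches the midpoint (since $v>1$), so the bike is indeed available there. Each robot then spends time $\frac{d/2}{v}$ biking and $\frac{d/2}{1}$ walking, for a total of $\frac{d}{2v}+\frac{d}{2}=\frac{d(v+1)}{2v}$, and both finish simultaneously. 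Dividing $d$ by this time gives the claimed speed $\frac{2v}{v+1}$, the harmonic mean of the walking and biking speeds.

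For the lower bound, fix any schedule that moves both robots a distance $d$ in time $T$. The key reduction is to assume all motion is monotone (to the right, with no backtracking): since the task only requires each robot and the bike to make net forward progress, one can project any trajectory onto its forward-only part without increasing the completion time, so it suffices to bound monotone schedules. Under monotonicity the bike traces a single rightward journey of total length at most $d$; writing $a_i$ for the distance robot $i$ rides the bike, this journey is partitioned among the riders, giving $a_1 + a_2 \le d$. Robot $i$ covers its remaining $d - a_i$ on foot, so its finishing time is $\frac{a_i}{v} + (d - a_i) = d - a_i\frac{v-1}{v}$, which must be at most $T$. Summing the two inequalities yields $2T \ge 2d - (a_1+a_2)\frac{v-1}{v} \ge 2d - d\frac{v-1}{v} = \frac{d(v+1)}{v}$, hence $T \ge \frac{d(v+1)}{2v}$, matching the upper bound.

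I expect the main obstacle to be making the monotonicity reduction rigorous: one must argue carefully that allowing a robot or the bike to move backward (for instance, carrying the bike back to hand it to a trailing robot) can never beat the forward-only schedule. Once monotonicity is granted, the accounting bound $a_1 + a_2 \le d$ and the averaging step are routine. Finally, optimality of the speed is immediate, since every schedule satisfies $d/T \le \frac{2v}{v+1}$, with equality attained by the equal-split schedule.
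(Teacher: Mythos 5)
Your proof is correct, but it is more self-contained than the paper's. For the upper bound you and the paper use the same equal-split schedule: the paper introduces an unknown drop-off point $x$, imposes simultaneous arrival via $\frac{x}{v}+d-x=x+\frac{d-x}{v}$, and solves to get $x=\frac d2$, whereas you posit the midpoint directly and verify feasibility (the bike waits at $d/2$ from time $\frac{d}{2v}<\frac d2$); these are the same construction. The real difference is in the optimality claim: the paper proves nothing here and simply cites \cite{jawharkranakis21} for the lower bound, while you supply a genuine averaging argument --- bound each robot's finishing time by $\frac{a_i}{v}+(d-a_i)$, use $a_1+a_2\le d$ since the bike is non-autonomous and its journey is partitioned among riders, and sum. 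This buys a complete one-page proof at the cost of the monotonicity reduction you rightly flag as the delicate step. That step can in fact be bypassed: instead of path lengths, let $\alpha_i$ be robot $i$'s \emph{net} riding displacement (forward minus backward riding). Then $T\ge \frac{\alpha_i}{v}+(d-\alpha_i)$ still holds for each robot, because riding time is at least $\frac{\alpha_i}{v}$ and walking path length is at least the net walking displacement $d-\alpha_i$; moreover $\alpha_1+\alpha_2$ equals the bike's net displacement, which is at most $d$ since no robot has reason to carry the bike past the exit (a robot reaching the exit evacuates). Summing the two inequalities then gives $2T\ge 2d-(\alpha_1+\alpha_2)\frac{v-1}{v}\ge \frac{d(v+1)}{v}$ with no monotone-projection lemma needed, and the argument even covers schedules where the bike is carried backward to a trailing robot. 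With that substitution your proof is airtight and strictly more informative than the paper's, which leaves the matching lower bound to an external reference.
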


\subsection{Related work}
The first search problems in the literature concerned a single robot and were proposed for a continuous infinite line by Beck~\cite{beck1964linear} and Bellman \cite{bellman1963optimal} with the focus on stochastic search models and their analysis. Assuming that the distance and the direction to the exit are unknown, they proposed an optimal algorithm with competitive ratio $9$. Additional research in that field can be found in ~\cite{ahlswede1987search,stone1975theory}. There have been many variants of the search problem such as having static or moving targets, involving multiple robots with limited communication behavior, or robots with differing speeds, etc. 

Search and evacuations problems were studied in environments with multiple distinct speed robots \cite{bampas2019linear,isaacCzyzowiczGKKNOS16,PODC16} as well as in various domains including disks, triangles, and circles \cite{brandt2017collaboration,czyzowicz2018evacuating,czyzowicz2019groupkos}. Evacuation algorithms for two robots on a line without bike assistance are known for two robots without limited communication behavior and with different speeds $v$ and $1$ such that $v < 1$; for example, there is an evacuation algorithm in the F2F model with optimal competitive ratio of $\frac{1+3 v}{1-v}$, for $v\leq \frac 13$, and $9$ otherwise, see~\cite{bampas2019linear}. The same paper also considers the wireless communication model. Our work differs from~\cite{bampas2019linear} in two aspects: first we are using a mixed communication model S/R, and second a bike (non-autonomous robot) is present which is shared by the two autonomous robots and has the effect of increasing the overall evacuation time of the ensemble. The competitive ratio of evacuation for two robots without communication fault using the F2F model is known to be $9$, see\cite{czyzowicz2019groupkos}. Additional studies on linear search can be found in the works of 
\cite{baezayates1993searching,BS95} and under various models of linear search
concerning search cost and robot communication in \cite{Bose16,chrobak2015group,demaine2006online}.

An interesting variant of the original linear search problem is concerned with the case where some of the robots are faulty (crash or byzantine). There are several research works on this theme. The two main papers in this line of research are \cite{PODC16} for crash-faulty robots and  \cite{isaacCzyzowiczGKKNOS16} for Byzantine-faulty robots. Moreover, \cite{czyzowicz2021searchnew} studies search and evacuation with a near majority of faulty agents (e.g., three robots at most one of which is byzantine faulty)--which also exhibits the worst-case evacuation time. In addition, the S/R mixed communication model (also considered in our current paper) was  considered as a way to model group search in the presence of robots with faulty communication capabilities and was first introduced for an infinite line by~\cite{czyzowicz2021groupevac}; it was shown that there is an evacuation algorithm with competitive ratio of $3 + 2 \sqrt{2}$ and this is optimal; however, this paper does not include the concept of the non-autonomous robot considered here. 

We should also mention that the S/R faulty communication model considered in our present paper was also studied for a unit disk in the recent paper \cite{georgiou2022evacuation}. Additional research on evacuation from a unit disk in the presence of a (byzantine) faulty robot was first studied in \cite{czyzowicz2017evacuation} and further elaborated for crash and byzantine faults in \cite{georgiou2019optimal}. 

Bike-assisted search and evacuation on a line was first considered for two robots without faults in~\cite{jawharkranakis21} and soon thereafter \cite{georgiou2022evacuation} for a unit disk. To the best of our knowledge the present study of Bike-assisted search and evacuation on a line for two robots and a bike in the S/R model has not been considered in the search literature.

\subsection{Outline and results of the paper}

In the sequel, in Section~\ref{sec:Evacuation in the S/R Model} we give the main evacuation algorithm in three parts: Subsubsection~\ref{subsec:alg1} includes the algorithm when $v \in [1,3]$, Subsection~\ref{subsec:alg2} the algorithm when $v \in [3,10]$, and Subsection~\ref{subsec:alg3} the algorithm when $v \in [10, \infty)$, while in Subsection~\ref{subsec:alg123} we plot the evacuation time for the entire range of the bike's speed $v$ so that we can gauge the performance of the main algorithm. Table~\ref{table1} displays in detail the competitive ratios in each of the algorithms for the upper bound cases, respectively, based on the speed $v$ of the biker.  
\begin{table}[htp]
\caption{Upper bound on the competitive ratio for the three different algorithms; the first column gives the algorithm, the second column the speed $v$ of the biker for which the respective upper bound on the competitive ratio in the third column is valid.}
\begin{center}
\begin{tabular}{| l | c | c|}
\hline
Algorithm & Bike Speed & Competitive Ratio \\
\hline
Algorithm~\ref{algorithm1} & $v \in [1,3]$ & $\frac{2v}{v+1}  \left(\frac{2 v+\frac{-(7v+v^2)+v\sqrt{v^2+30 v+97}}{2 v+6}}{v \frac{-(7v+v^2)+v\sqrt{v^2+30 v+97}}{2 v+6}}\right)$\\
\hline
Algorithm~\ref{algorithm2} & $v \in [3,10]$ & $\frac{2v}{v+1} \left(1+\frac{1}{v}+\frac{v^{2}-3 v-2-\sqrt{v^{4}+18 v^{3}-7 v^{2}+4 v+4}}{4 v (1-v)}\right)$\\
\hline
Algorithm~\ref{algorithm3} & $v \in [10,\infty]$ & $\frac{2 v}{v+1} \left(\frac{9}{v}+\frac{1}{2}-\frac{1}{2v^2}\right)$\\
\hline
\end{tabular}
\end{center}
\label{table1}
\end{table}

We prove a lower bound in Section~\ref{sec:Lower Bound}, 
and in Table~\ref{table2} we display the competitive ratios in each of the algorithms for the lower bound cases, respectively, based on the interval to which the speed $v$ of the biker belongs to.  
\begin{table}[htp]
\caption{Lower bound on the competitive ratio for the evacuation of the two robots for the given range on the speed $v$ of the biker.}
\begin{center}
\begin{tabular}{|c| c|}
\hline
 Bike Speed $v$ & Competitive Ratio \\
\hline
$v \leq 3$ & $\frac{v^{2}+2 v-3}{v^{2}-1}$ \\
\hline
$v > 3$ & $\frac{6}{v+1}$ \\
\hline
\end{tabular}
\end{center}
\label{table2}
\end{table}
Finally we conclude in Section~\ref{sec:Conclusion} with a summary and a discussion of additional ideas for research. 

Due to space limitations all missing proofs can be found in the appendix.

\section{Evacuation in the S/R Model}
\label{sec:Evacuation in the S/R Model}

In this section we describe the algorithms in order to obtain upper bounds for two robots in the S/R model. We also analyze the resulting competitive ratios. Recall that the competitive ratio of an algorithm is defined as the evacuation time of the algorithm divided by the shortest time required by both robots to reach the exit if they know where the exit is. 

 In order to reach the exit in shortest time possible, there are two cases to consider. In the first case, the two robots choose to move in opposite direction and the one which finds the exit first will notify the other one (see Subsection~\ref{subsec:oppdir}). In the second case, the sender takes the bike and uses a Zig-Zag strategy and the receiver imitates the sender using its maximum unit speed (see Subsection~\ref{subsec:alg3}).  Let us study the algorithms proposed in each of the two cases.

\subsection{Robots move in opposite direction}
\label{subsec:oppdir}

Let us consider the case that the sender $S$ and the receiver $R$ are moving in opposite direction. To this end we will ``reduce'' the speeds of the sender and the receiver and our analysis will show the choice that will yield the shortest evacuation time. Let us assume that $S$ moves with speed $0 \leq u_{1} \leq 1$ and $R$ takes the bike and moves with speed $0\leq u_{2}\leq v$. As already proposed, the objective is to find the optimal speeds $u_{1}$ and $u_{2}$ that should be used by $S$ and $R$, respectively, in order for the resulting evacuation time to be minimal. There are two cases to take into account depending on which of the two autonomous robots finds the exit first:

\paragraph{Case~1:} $S$ finds the exit first.

As soon as S reaches the exit, it communicates with R which is on the other side to proceed to the exit. Based on Fig. \ref{fig2}, 
\begin{figure}[h] 
\begin{center}
\epsfig{width=11cm,file=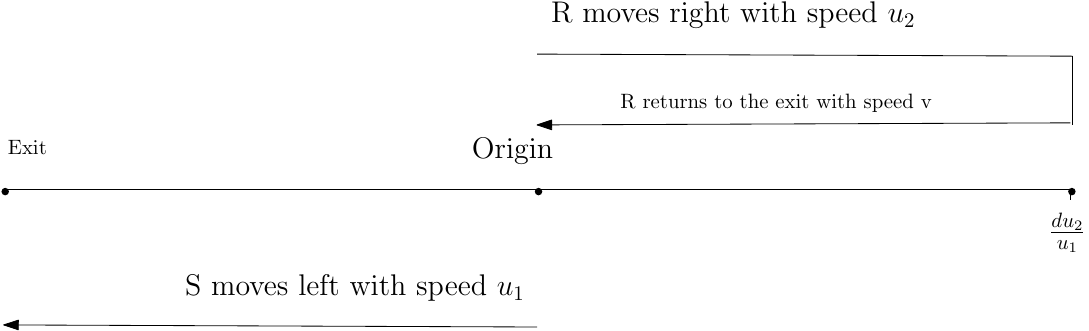}
\end{center}
     \caption{Depicted is Case 1 where two robots S and R are moving in opposite direction such that S finds the exit first.}
       \label{fig2}
\end{figure} 
S needs time $\frac{d}{u_{1}}$ to reach the exit. When S reaches the exit, R would be at distance $\frac{d u_{2}}{u_{1}}$ on the other side and it needs time $\frac{d u_{2}}{v u_{1}}$ to go back to the origin and additional time $\frac{d}{v}$ to go from the origin to the exit. Thus taking this into account the total evacuation time will be as follows:
\begin{align}
\label{eq:evac1}
    {\cal E}_1=\frac{d}{u_{1}}+\frac{d u_{2}}{v u_{1}}+\frac{d}{v}
\end{align}

\paragraph{Case~2:} $R$ finds the exit first.

Based on Fig. \ref{fig3}, 
\begin{figure}[h] 
\begin{center}
\epsfig{width=11cm,file=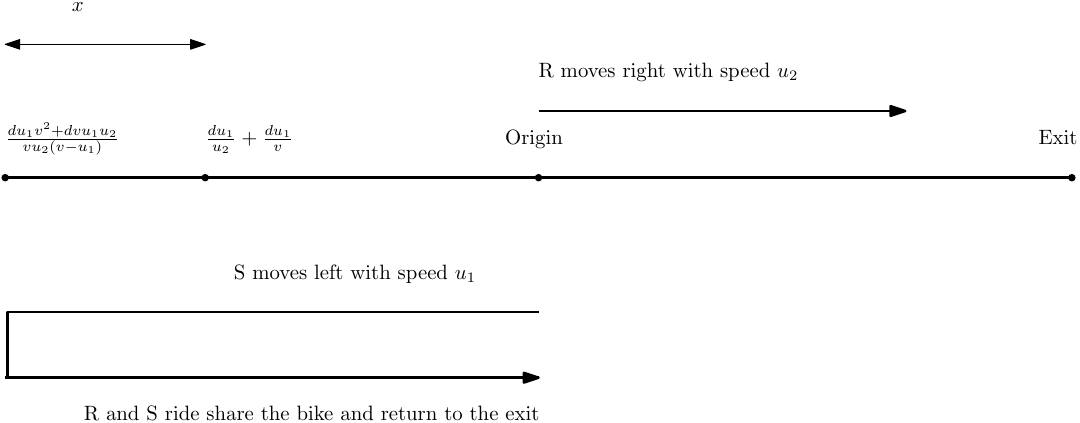}
\end{center}
     \caption{Depicted is Case 2 where two robots S and R are moving in opposite direction and such that R finds the exit.}
       \label{fig3}
\end{figure} 
when $R$ reaches the exit, then $S$ is at distance $\frac{d u_{1}}{u_{2}}$ to the left of the origin. In this case $R$ needs to go back to the left to bring $S$ toward the origin. When $R$ reaches the origin, $S$ will be at distance $\frac{d u_{1}}{u_{2}}+\frac{d u_{1}}{v}$ to the left of the origin. At that point if $S$ meets $R$ at distance $x$ from its current position, then we have the following:
\begin{align*}
    \frac{x}{u_{1}}=\frac{d u_{1}}{u_{2} v}+\frac{d  u_{1}}{v^2}+\frac{x}{v}
    &\implies x  \left( \frac{v-u_{1}}{u_{1} v} \right)=\frac{d u_{1} v+d u_{1} u_{2}}{u_{2} v^{2}}\\
    &\implies x=\frac{d u_{1}^{2} v+d u_{1}^{2} u_{2}}{(v-u_{1}) u_{2} v}\\
\end{align*}
After $S$ and $R$ meet, they will be away from the origin by the following distance:
\begin{align*}
\frac{d u_{1}}{u_{2}}+\frac{d u_{1}}{v}+x
&=\frac{d u_{1}}{u_{2}}+\frac{d u_{1}}{v}+\frac{d u_{1}^{2} v+d u_{1}^{2} u_{2}}{(v-u_{1}) u_{2} v}
=\frac{d u_{1} v^{2}+d v u_{1} u_{2}}{v u_{2}(v-u_{1})}\\
\end{align*}
Thus $S$ and $R$ will be away from the exit by $$d+\frac{d u_{1} v^{2}+d v u_{1} u_{2}}{v u_{2}(v-u_{1})}=\frac{d u_{1} v+d u_{2} v}{u_{2} (v-u_{1})}.$$
Using Lemma \ref{lm:bikeshare}, the time needed to share the bike at distance $\frac{d u_{1} v+d u_{2} v}{u_{2} (v-u_{1})}$ from the exit will be as follows:
\begin{align*}
    \frac{(d u_{1} v+d u_{2} v)(v+1)}{2 v u_{2}(v-u_{1})}=\frac{d u_{1} v^{2}+d u_{1} v+d u_{2} v^{2}+d u_{2} v}{2 u_{2} v (v-u_{1})}
\end{align*}
Now we can find the evacuation time ${\cal E}_2$ as follows:
\begin{align}
{\cal E}_2&=\notag
\frac{d}{u_{2}}+\frac{d}{v}+\frac{d u_{1}}{u_{2} v}+\frac{d u_{1}}{v^{2}}+\frac{d u_{1}^{2} v+d u_{1}^{2} u_{2}}{(v-u_{1}) u_{2} v^{2}}+\frac{d u_{1} v^{2}+d u_{1} v+d u_{2} v^{2}+d u_{2} v}{2 u_{2} v (v-u_{1})}\\
    &=\label{eq:evac2} 
    \frac{2 d v^{2}+2 d u_{2} v+d u_{1} v^{2}+d u_{1} v+d u_{2} v^{2}+d v u_{2}}{2 u_{2} v (v-u_{1})}.
\end{align}

Since we do not know where the exit is situated, we should consider the worst case scenario of the evacuation time, denoted by ${\cal E}$, to be the maximum of ${\cal E}_1$ and ${\cal E}_2$, as given in Equations~\eqref{eq:evac1}~and~\eqref{eq:evac2}, as follows:
\begin{align}
\label{eq:evactime1}
    {\cal E}&=\max\{{\cal E}_1,{\cal E}_2\} 
\end{align}
In order to minimize the total evacuation time, we need to know for what value of $v$ is each of ${\cal E}_1$ and ${\cal E}_2$ minimized.

Using Equation~\eqref{eq:evac1}, we see that ${\cal E}_1$ is minimized if $u_{1}=1$ and $u_{2}=0$ since we assume that the exit is on the side of the sender $S$. In this case ${\cal E}_1$ will be as follows: ${\cal E}_1=d\cdot \frac{1+v}{v}$.

Using Equation~\eqref{eq:evac2}, we see that ${\cal E}_2$ is minimized if $u_{2}=v$ and $u_{1}=0$ since we assume that the exit is on the side of the receiver $R$.
In this case ${\cal E}_2$ will be as follows:
$    {\cal E}_2=d\cdot \frac{v^{3}+5 v^{2}}{2 v^{3}}=d\cdot \frac{v+5}{2 v}$.
Thus 
$$
{\cal E}_1-{\cal E}_2=d\cdot (\frac{1+v}{v}-\frac{v+5}{2 v})=d\cdot\frac{v-3}{2 v}.
$$ 
We conclude that ${\cal E}_1\leq {\cal E}_2$, if $v\leq 3$. 
Thus we have two cases to consider based on the speed of the biker. 

\paragraph{Case~1:} $v\leq 3$
\newline We have ${\cal E}_1<{\cal E}_2$ and thus the maximum evacuation time is ${\cal E}=\max\{{\cal E}_1,{\cal E}_2\}={\cal E}_2$ and in this case ${\cal E}_2$ is minimized if we set $u_{2}=v$. 

\paragraph{Case~2:} $3\leq v\leq 10$ 
\newline (Note that if $v>10$, then the zig-zag algorithm discussed later in section \ref{subsec:alg3} performs better.) We have ${\cal E}_2<{\cal E}_1$ and thus the maximum evacuation time is ${\cal E}=\max\{{\cal E}_1,{\cal E}_2\}={\cal E}_1$ and in this case ${\cal E}_1$ is minimized when we set $u_{1}=1$. 

\subsubsection{Algorithm~\ref{algorithm1}: $1\leq v \leq 3$}
\label{subsec:alg1}
\paragraph{}In the first algorithm, the receiver takes the bike and moves in one direction with its maximum speed, while the sender moves with optimal speed $0\leq u_{1}\leq 1$ in the other direction. If the sender finds the exit first, then it informs the receiver and stays at the exit, while if the receiver finds the exit first, then it changes direction to catch up with the sender and then both robots ride-share the bike toward the exit. The first algorithm performs better when the speed of the bike satisfies $1\leq v \leq 3$. If we assume the sender moves to the left with speed $0\leq u_{1}\leq 1$ and the receiver to the right with maximum speed $v$. 

The algorithm will be as follows:
\begin{algorithm}[H]
\caption{OppDirectionWithBikerMovingAtMaxSpeed ($S$ source, $D$ destination)}\label{algorithm1}
\begin{algorithmic}[1]
\State {Sender moves to the left with speed $u_{1}$}
\State {Receiver takes the bike and moves to the right with speed $v$}
\If {Receiver finds the Exit first}
\State{it changes direction and catches up to the Sender;}
\State{The two robots change direction and ride-share the bike towards the Exit;}
\Else
\If {Sender finds the Exit first} 
\State{it informs the Receiver that the Exit has been found but stays at the Exit;}
\State{The Receiver rides the bike to the Exit;}
\EndIf
\EndIf
\end{algorithmic}
\end{algorithm}
We prove the following result.
\begin{theorem}
\label{theorem1}
The optimal competitive ratio of Algorithm~\ref{algorithm1} is $$\frac{2v}{v+1}  \left(\frac{2 v+\frac{-(7v+v^2)+v\sqrt{v^2+30 v+97}}{2 v+6}}{v \frac{-(7v+v^2)+v\sqrt{v^2+30 v+97}}{2 v+6}}\right)$$
\end{theorem}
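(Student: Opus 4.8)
The plan is to reduce the optimization to a single free parameter and then balance the two evacuation-time expressions already derived. Since Algorithm~\ref{algorithm1} fixes the receiver's speed at $u_{2}=v$ (it rides the bike at maximum speed), the only quantity left to choose is the sender's speed $u_{1}\in(0,1]$. Substituting $u_{2}=v$ into Equation~\eqref{eq:evac1} gives
$$
{\cal E}_1=\frac{2d}{u_{1}}+\frac{d}{v},
$$
and substituting $u_{2}=v$ into Equation~\eqref{eq:evac2} and simplifying gives
$$
{\cal E}_2=\frac{d\left(v^{2}+5v+u_{1}(v+1)\right)}{2v(v-u_{1})}.
$$
Because the adversary controls which side hosts the exit, the worst-case evacuation time is ${\cal E}=\max\{{\cal E}_1,{\cal E}_2\}$, and I want the value of $u_{1}$ that minimizes this maximum.

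The first fact I would record is a monotonicity observation: ${\cal E}_1$ is strictly decreasing in $u_{1}$ (the term $2d/u_{1}$ shrinks as $u_{1}$ grows), while ${\cal E}_2$ is strictly increasing in $u_{1}$ (its numerator grows and its denominator shrinks). Hence on $(0,1]$ the minimizer of the maximum is attained exactly where the two curves cross, i.e.\ at the solution of ${\cal E}_1={\cal E}_2$, provided that crossing point lies in the admissible interval. Clearing denominators in ${\cal E}_1={\cal E}_2$ and collecting powers of $u_{1}$ reduces the equation to the quadratic
$$
(v+3)\,u_{1}^{2}+(v^{2}+7v)\,u_{1}-4v^{2}=0.
$$

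The main computational step is to solve this quadratic. Its discriminant is $(v^{2}+7v)^{2}+16v^{2}(v+3)=v^{2}(v^{2}+30v+97)$, which factors cleanly and yields $\sqrt{\Delta}=v\sqrt{v^{2}+30v+97}$; taking the positive root gives
$$
u_{1}^{*}=\frac{-(7v+v^{2})+v\sqrt{v^{2}+30v+97}}{2v+6},
$$
exactly the expression appearing inside the theorem statement. I would then verify that $u_{1}^{*}\in(0,1]$ throughout $v\in[1,3]$; checking the endpoints confirms $u_{1}^{*}=\sqrt{2}-1$ at $v=1$ and $u_{1}^{*}=1$ at $v=3$, so the balancing point is admissible and monotonicity guarantees it is optimal.

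Finally I would convert the optimized evacuation time into a competitive ratio. When the exit location is known, both robots ride-share the bike to cover distance $d$, so by Lemma~\ref{lm:bikeshare} the optimal time is $\mathrm{OPT}=\frac{d(v+1)}{2v}$. Since $\frac{{\cal E}_1}{d}=\frac{2v+u_{1}}{v\,u_{1}}$, evaluating at $u_{1}=u_{1}^{*}$ (where ${\cal E}={\cal E}_1={\cal E}_2$) yields
$$
\frac{{\cal E}}{\mathrm{OPT}}=\frac{2v}{v+1}\cdot\frac{2v+u_{1}^{*}}{v\,u_{1}^{*}},
$$
which is the claimed formula. The only genuine obstacle is the algebra of solving the quadratic and simplifying the discriminant to the advertised radical; the monotonicity argument, the admissibility check at the endpoints, and the $\mathrm{OPT}$ computation are all short.
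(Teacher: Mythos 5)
Your proposal is correct and follows essentially the same route as the paper's own proof: fix $u_{2}=v$, balance ${\cal E}_1={\cal E}_2$ to get the quadratic $(v+3)u_{1}^{2}+(v^{2}+7v)u_{1}-4v^{2}=0$, keep the positive root, and divide by the Lemma~\ref{lm:bikeshare} benchmark $\frac{d(v+1)}{2v}$. Your added monotonicity argument (showing the min-max is attained at the crossing point) and the endpoint verification $u_{1}^{*}=\sqrt{2}-1$ at $v=1$ and $u_{1}^{*}=1$ at $v=3$ actually make the argument tighter than the paper's, which merely asserts that equalizing the two times is optimal and ``expects'' $u_{1}\leq 1$.
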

\begin{proof}
We have that $S$ moves in one direction with speed $0\leq u_{1}\leq 1$ and $R$ moves in the other direction with speed $v$. Thus we will have the following equations
    ${\cal E}_1=\frac{2 d}{u_{1}}+\frac{d}{v}\:\:\:and\:\:\: {\cal E}_2=\frac{d v^{2}+5 d v+d u_{1} v+d u_{1}}{2 v (v-u_{1})}$.
In order to determine the optimal speed $u_{1}$ for the sender, we should set the evacuation times in both cases above to be equal, namely ${\cal E}_2 = {\cal E}_1$. This will give us the following:
\begin{align*}
    &\frac{d(v^2+5 v+u_{1}  v+u_{1})}{2 v (v-u_{1})}=\frac{2 d v+d u_{1}}{u_{1} v}
    \implies (3+v) u_{1}^{2}+(7 v+v^{2}) u_{1}-4 v^{2}=0
\end{align*}
This is a quadratic equation of $u_{1}$ and it has two solutions. Keeping the positive root we get 
$$
u_{1}=\frac{-(7v+v^2)+v\sqrt{v^2+30 v+97}}{2 v+6}.
$$ 
We expect the optimal speed $u_{1}$ to be less than $1$ when the speed $v$ is between $1$ and $3$.
If we substitute $u_{1}$ in the evacuation time, denoted by ${\cal E}$, calculated in any of the two cases we get the following
\begin{align*}
    {\cal E}=d \left(\frac{2 v+\frac{-(7v+v^2)+v\sqrt{v^2+30 v+97}}{2 v+6}}{v \frac{-(7v+v^2)+v\sqrt{v^2+30 v+97}}{2 v+6}}\right)
\end{align*}
Thus, using Lemma~\ref{lm:bikeshare}, the competitive ratio will be 
$$
\frac{2v}{v+1}  \left(\frac{2 v+\frac{-(7v+v^2)+v\sqrt{v^2+30 v+97}}{2 v+6}}{v \frac{-(7v+v^2)+v\sqrt{v^2+30 v+97}}{2 v+6}}\right).
$$
This completes the proof of Theorem~\ref{theorem1}. \qed
\end{proof}

\subsubsection{Algorithm~\ref{algorithm2}: $3\leq v\leq 10$}
\label{subsec:alg2}

\paragraph{} In the second algorithm, the  sender moves in one direction with unit speed while the receiver takes the bike and moves in the other direction with optimal speed $1<u_{2}<v$. If the sender finds the exit first, then it informs the receiver and stays at the exit, while if the receiver finds the exit first, then it changes direction to catch up with the sender and then both robots ride-share the bike toward the exit. The second algorithm performs better when the speed $3\leq v\leq 10$ as we will see later in figure \ref{Performance}. We assume the sender moves to the left with unit speed and the receiver to the right with speed $1\leq u_{2}\leq v$. 

The main algorithm will be as follows: 
\begin{algorithm}[H]
\caption{OppDirectionWithWalkerMovingAtMaxSpeed ($S$ source, $D$ destination)}\label{algorithm2}
\begin{algorithmic}[1]
\State {Sender moves to the left with unit speed}
\State {Receiver takes the bike and moves to the right with speed $u_{2}$}
\If {Receiver finds the Exit first}
\State{it changes direction and catches up to the Sender;}
\State{The two robots change direction and ride-share the bike towards the Exit;}
\Else
\If {Sender finds the Exit first} 
\State{it informs the Receiver that the Exit has been found but stays at the Exit;}
\State{The Receiver rides the bike to the Exit;}
\EndIf
\EndIf
\end{algorithmic}
\end{algorithm}
We prove the following result.
\begin{theorem}
\label{theorem2}
The optimal competitive ratio of Algorithm~\ref{algorithm2} is:
\begin{align*}
\frac{2v}{v+1} \left(1+\frac{1}{v}+\frac{v^{2}-3 v-2-\sqrt{v^{4}+18 v^{3}-7 v^{2}+4 v+4}}{4 v (1-v)}\right)
\end{align*}
\end{theorem}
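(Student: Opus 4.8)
The plan is to mirror the structure of the proof of Theorem~\ref{theorem1}, but now fixing the sender's speed at $u_1 = 1$ and treating the biker's speed $u_2$ as the free parameter. First I would specialize the two worst-case evacuation times from Equations~\eqref{eq:evac1} and~\eqref{eq:evac2} to the setting of Algorithm~\ref{algorithm2}. Substituting $u_1 = 1$ into~\eqref{eq:evac1} gives ${\cal E}_1 = d\left(1 + \frac{1}{v} + \frac{u_2}{v}\right)$, and substituting $u_1 = 1$ into~\eqref{eq:evac2} and simplifying yields ${\cal E}_2 = \frac{d}{2(v-1)}\left(v + 3 + \frac{3v+1}{u_2}\right)$.

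Next, since the adversary may place the exit on either side of the origin, the quantity to minimize is ${\cal E} = \max\{{\cal E}_1, {\cal E}_2\}$ over the admissible range $1 \le u_2 \le v$. I would observe that $u_2 \mapsto {\cal E}_1$ is strictly increasing in $u_2$, while the form above makes it transparent that $u_2 \mapsto {\cal E}_2$ is strictly decreasing (the term $\frac{3v+1}{u_2}$ shrinks as $u_2$ grows). Hence the pointwise maximum of the two branches is minimized exactly where they cross, so the optimal biker speed is the solution of the balance equation ${\cal E}_1 = {\cal E}_2$.

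Setting ${\cal E}_1 = {\cal E}_2$, clearing denominators and collecting terms reduces the balance condition to the quadratic $2(v-1)\,u_2^2 + (v^2 - 3v - 2)\,u_2 - (3v^2 + v) = 0$. Since the leading coefficient $2(v-1)$ is positive and the constant term $-(3v^2+v)$ is negative for $v>1$, the product of the roots is negative, so there is a unique positive root, taken with the $+$ sign in the quadratic formula; its discriminant equals $(v^2-3v-2)^2 + 8(v-1)(3v^2+v)$, a quartic in $v$ matching the radicand displayed in the statement. Multiplying numerator and denominator by $-1$ rewrites $\frac{u_2}{v}$ precisely as the fraction $\frac{v^2 - 3v - 2 - \sqrt{\,\cdot\,}}{4v(1-v)}$ in the theorem. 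Substituting this root back into ${\cal E}_1/d = 1 + \frac{1}{v} + \frac{u_2}{v}$ and multiplying by $\frac{2v}{v+1}$ — the reciprocal of the optimal full-knowledge ride-sharing time $\frac{d(v+1)}{2v}$ from Lemma~\ref{lm:bikeshare} — produces the claimed competitive ratio.

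The main obstacle is twofold. The routine but error-prone part is the simplification of ${\cal E}_2$ after setting $u_1 = 1$ and the careful clearing of denominators needed to land on the clean quadratic above, since a single sign slip propagates into the coefficient under the radical. The more conceptual point requiring justification is the monotonicity argument guaranteeing that balancing ${\cal E}_1 = {\cal E}_2$ truly yields the minimax optimum, together with verifying that the positive root $u_2$ lies in the admissible interval $(1,v)$ throughout $3 \le v \le 10$. The natural consistency check is the endpoint $v = 3$, where the quadratic forces $u_2 = v = 3$, so Algorithm~\ref{algorithm1} and Algorithm~\ref{algorithm2} agree and both evacuation branches equal $\tfrac{7}{3}d$, confirming the continuous transition between the two regimes.
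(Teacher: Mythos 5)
Your strategy is exactly the paper's: specialize $u_1=1$ in Equations~\eqref{eq:evac1} and~\eqref{eq:evac2}, balance ${\cal E}_1={\cal E}_2$, solve the resulting quadratic in $u_2$, substitute back, and normalize by $\frac{2v}{v+1}$ via Lemma~\ref{lm:bikeshare}. Your specializations are correct (your form $\frac{d}{2(v-1)}\bigl(v+3+\frac{3v+1}{u_2}\bigr)$ agrees with the paper's ${\cal E}_2=\frac{3dv+du_2v+3du_2+d}{2u_2(v-1)}$), and your quadratic $2(v-1)u_2^2+(v^2-3v-2)u_2-(3v^2+v)=0$ is exactly the paper's $(2-2v)u_2^2+(3v-v^2+2)u_2+3v^2+v=0$ up to sign. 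Your monotonicity argument (${\cal E}_1$ increasing, ${\cal E}_2$ decreasing in $u_2$, so the minimax of the pointwise maximum is at the crossing) is a genuine improvement: the paper simply asserts that equalizing the two cases is optimal.

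There is, however, one concrete step that fails: your claim that the discriminant $(v^2-3v-2)^2+8(v-1)(3v^2+v)$ is ``a quartic in $v$ matching the radicand displayed in the statement.'' Expanding gives $(v^2-3v-2)^2=v^4-6v^3+5v^2+12v+4$ and $8(v-1)(3v^2+v)=24v^3-16v^2-8v$, so the discriminant is $v^4+18v^3-11v^2+4v+4$, \emph{not} the displayed $v^4+18v^3-7v^2+4v+4$; the statement (and the paper's own proof) carries an arithmetic slip in the $v^2$ coefficient. Your own proposed consistency check at $v=3$ detects this, had you evaluated it: the quadratic becomes $4u_2^2-2u_2-30=0$ with positive root $u_2=3=v$, which requires discriminant $484=22^2$, and indeed $81+486-99+12+4=484$; by contrast the displayed radicand evaluates to $81+486-63+12+4=520$, giving $u_2=\frac{2+\sqrt{520}}{8}\approx 3.10>v$, outside the admissible interval, so the two regimes would not even meet continuously at $v=3$. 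In short, your method is sound and would prove the theorem with $-11v^2$ in place of $-7v^2$, but as written you asserted rather than verified the expansion, and so reproduced (rather than caught) the error; you also left unverified, as does the paper, that the positive root lies in $(1,v)$ throughout $3\le v\le 10$.
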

\begin{proof}
 $S$ moves in one direction with unit speed and $R$ moves in the other direction with speed $3\leq u_{2}\leq v$. We will have the following:
\begin{align*}
    &{\cal E}_1=\frac{d v+d u_{2}+d}{v}
    \mbox{ and }
    {\cal E}_2=\frac{3 d v+d u_{2} v+3 d u_{2}+d}{2 u_{2} (v-1)}
\end{align*}
The optimal speed $u_{2}$ is obtained by setting the evacuation time in the two cases above to be equal, namely ${\cal E}_2= {\cal E}_1$. Thus we have the following:
\begin{align*}
  &\frac{u_{2}+v+1}{v}=\frac{3 v^2+3 u_{2} v+u_{2} v^{2}+v}{2 u_{2} v (v-1)}\\
\end{align*}
If we multiply out we will end up having the following quadratic equation in the variable $u_2$:
$
  (2-2 v) u_{2}^{2}+(3 v-v^{2}+2) u_{2}+3 v^{2}+v=0 .
$  
Keeping the positive root gives:
\begin{align*}
u_{2}=\frac{v^{2}-3 v-2-\sqrt{v^{4}+18 v^{3}-7 v^{2}+4 v+4}}{4 (1-v)}
\end{align*}
Thus the evacuation time will be 
\begin{align*}
d \left(1+\frac{1}{v}+\frac{v^{2}-3 v-2-\sqrt{v^{4}+18 v^{3}-7 v^{2}+4 v+4}}{4 v (1-v)}\right)
\end{align*}
Using Lemma~\ref{lm:bikeshare}, it follows that the competitive ratio will be 
\begin{align*}
\frac{2v}{v+1} \left(1+\frac{1}{v}+\frac{v^{2}-3 v-2-\sqrt{v^{4}+18 v^{3}-7 v^{2}+4 v+4}}{4 v (1-v)}\right)
\end{align*}
This completes the proof of Theorem~\ref{theorem2}. \qed
\end{proof}

\subsection{Algorithm~\ref{algorithm3}: $10< v$}
\label{subsec:alg3}

In the next Algorithm~\ref{algorithm3}, the sender takes the bike and uses a doubling strategy to search for the exit and moves a distance $2^{k}$ during the $k$-th iteration. The receiver also uses a doubling strategy but since it is moving with unit speed it will try to stay as close as possible to the biker. This can be achieved by having the receiver moves a distance $\frac{2^{k}}{v}$ during the $k^{th}$ iteration, since moving any further will cause the sender to be farther away from the receiver during the $(k+1)^{st}$ iteration. 

In the third algorithm the Receiver imitates the Sender and as it will be seen in the sequel it performs better when the speed $v$ satisfies $v>10$.  The main algorithm is as follows:
\newline
\begin{algorithm}[H]
 \caption{(ReceiverImitateSender)}
 \label{algorithm3}
\begin{algorithmic}[1]
 \For{$k \gets 1$ to $\infty$} 
  \If{k is odd (resp.even)}
  \State{Sender takes the bike and moves right (resp. left) a distance {$2^k$} unless the exit is found;}
  \State{Receiver moves right (resp. left) a distance {$\frac{2^k}{v}$};}
 \If{exit is found by sender}
     \State{Communicate with receiver};
     \State{Sender moves back $\frac{d}{2}-\frac{d}{2v}$ to leave the bike for the receiver and then returns to exit;}
     \State{Receiver continues toward the exit after picking up the bike;}
     \State{Quit;}
 \EndIf
   \State{Sender turns; then moves left (resp. right), returns to the origin;} 
   \State{Receiver turns; then moves left (resp. right), returns to the origin;}
   \EndIf
\EndFor
 \end{algorithmic}
\end{algorithm}
\begin{theorem}
\label{theorem3-ub}
The competitive ratio for Algorithm~\ref{algorithm3} is 
$
\leq \left(\frac{2 v}{v+1}\right) \left(\frac{9}{v}+\frac{1}{2}-\frac{1}{2v^2}\right).
$
\end{theorem}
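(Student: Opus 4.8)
The plan is to track both robots through the doubling search, compute the exact evacuation time as a function of the iteration $n$ in which the sender discovers the exit and of the true distance $d$, and then maximise the ratio of this time to the offline optimum $\frac{d(v+1)}{2v}$ supplied by Lemma~\ref{lm:bikeshare}. First I would record the key synchronisation invariant: during iteration $k$ the sender rides distance $2^k$ at speed $v$ while the receiver walks distance $2^k/v$ at speed $1$, so both move for exactly the same time $2^k/v$ per half-iteration and turn together; hence at every instant the receiver sits at $1/v$ times the sender's signed displacement from the origin. Summing the round trips of iterations $1,\dots,n-1$ and the final outbound leg of length $d$, the sender travels $2^{n+1}-4+d$ before reaching the exit, so it arrives at time $T=\frac{2^{n+1}-4+d}{v}$, at which instant the receiver lies at distance $d/v$ on the same side, leaving a gap of $\frac{d(v-1)}{v}$ to the exit.

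Next I would analyse the handoff. The sender backtracks the bike by $\frac{d}{2}-\frac{d}{2v}$, i.e.\ half the gap, dropping it at distance $\frac{d(v+1)}{2v}$ and then hiking the remaining $\frac{d(v-1)}{2v}$ to the exit at unit speed; meanwhile the receiver walks the same distance $\frac{d(v-1)}{2v}$ to the drop point and then rides the bike the final $\frac{d(v-1)}{2v}$ to the exit at speed $v$. I would check the two small consistency facts that make the scheme legal---the bike is dropped (at time $\frac{d(v-1)}{2v^2}$ after $T$) strictly before the receiver reaches the drop point (at time $\frac{d(v-1)}{2v}$ after $T$, which is larger since $v>1$), and the midpoint choice makes both robots reach the exit simultaneously---so that the evacuation finishes at
\[
\mathcal{E}=T+\frac{d(v-1)}{2v}+\frac{d(v-1)}{2v^2}=\frac{2^{n+1}-4}{v}+\frac{d(v^2+2v-1)}{2v^2}.
\]

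Finally I would divide by the offline optimum. Using Lemma~\ref{lm:bikeshare} the ratio becomes $\frac{2(2^{n+1}-4)}{d(v+1)}+\frac{v^2+2v-1}{v(v+1)}$, whose second term is independent of $n$ and $d$. The first term is decreasing in $d$, so the worst case pushes $d$ to the smallest value consistent with discovery in iteration $n$: since iteration $n$ searches the same side as iteration $n-2$, the exit must satisfy $d>2^{n-2}$, and letting $d\to 2^{n-2}$ together with $n\to\infty$ drives the first term up to its supremum $\frac{16}{v+1}$. Combining the two terms and clearing denominators yields $\frac{v^2+18v-1}{v(v+1)}$, which equals exactly $\left(\frac{2v}{v+1}\right)\left(\frac{9}{v}+\frac{1}{2}-\frac{1}{2v^2}\right)$, establishing the claimed upper bound. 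The main obstacle is the handoff bookkeeping: one must pin down the drop location, verify the ordering of the drop and pickup events, and confirm the simultaneous arrival before the clean closed form for $\mathcal{E}$ emerges; the subsequent optimisation is the classical doubling-search worst case adapted to speed $v$ and is comparatively routine.
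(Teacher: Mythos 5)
Your proposal is correct and follows essentially the same route as the paper's proof: the same synchronized doubling trajectories, the same drop-off point $x=\frac{d}{2}-\frac{d}{2v}$ determined by simultaneous arrival, the same worst-case relation $d>2^{n-2}$, and Lemma~\ref{lm:bikeshare} for the offline optimum. The only differences are bookkeeping-level --- you sum the round trips exactly as $2^{n+1}-4$ and extract the bound via a supremum as $d\to 2^{n-2}$ and $n\to\infty$, while the paper substitutes $2^{k+1}<8d$ directly and drops the negative constant term --- and both yield the identical final expression $\left(\frac{2v}{v+1}\right)\left(\frac{9}{v}+\frac{1}{2}-\frac{1}{2v^2}\right)$; your explicit verification that the bike is dropped before the receiver reaches the drop point is a welcome detail the paper leaves implicit.
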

\subsection{Performance of the Upper bound algorithms}
\label{subsec:alg123}

The performance of the three algorithms is compared in Fig.~\ref{Performance}. As it can be seen in this figure, the performance of the algorithm depends on the speed $v$ of the non-autonomous robot. 
\begin{figure}[h]
\begin{center}
\includegraphics[width=10cm,height=7cm]{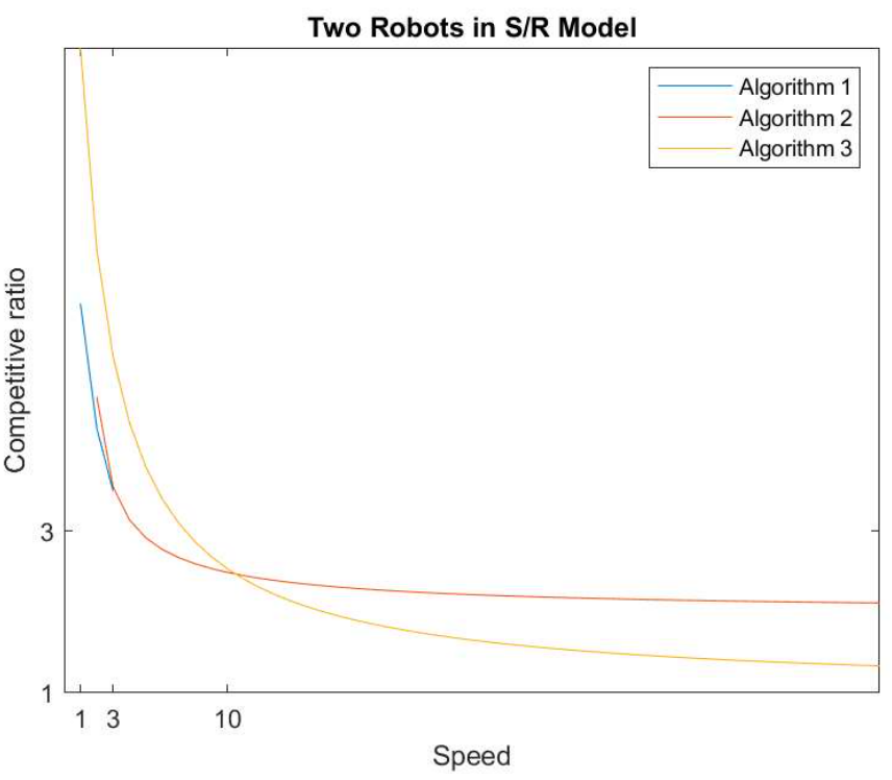}
\end{center}
\caption[Graph showing the performance of the 3 algorithms by showing how the competitive ratio fluctuates in terms of speed]{Graph showing the performance of the 3 algorithms by showing how the competitive ratio fluctuates in terms of the speed. Note in the pictures above Algorithms 1 is depicted in the range $[1,3]$ and Algorithms 2 and 3 are depicted in the range $[1,+\infty)$. }
  \label{Performance}
\end{figure} 

From the plots we can confirm that Algorithm~\ref{algorithm1} performs best if the speed is less than 3, while Algorithm~\ref{algorithm2} performs best if the speed is between $3$ and $10$. Finally, Algorithm~\ref{algorithm3} performs best if the speed is larger than 10 and in this case the competitive ratio converges to $1$ as $v \to \infty$, which is also the best that could be achieved if both autonomous robots have full knowledge in that they know where the exit is.

\section{Lower Bound}
\label{sec:Lower Bound}

Finally, we prove a lower bound.
\begin{theorem}
\label{theorem3-lb}
The lower bound for the competitive ratio is 
$$
\left\{
\begin{array}{cl}
\frac{6}{v+1} & \mbox{ if $v\leq 3$}\\
\frac{v^{2}+2 v-3}{v^{2}-1}  & \mbox{ if $v> 3$}
\end{array}
\right.
$$
\end{theorem}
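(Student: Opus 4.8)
The plan is to lower bound the online evacuation time $\mathcal{E}$ of an arbitrary algorithm and then divide by the offline optimum. By Lemma~\ref{lm:bikeshare}, if both robots know the exit is at distance $d$ in a known direction they can bike-share the segment of length $d$, so the offline optimal time is $\mathrm{OPT}=\frac{d(v+1)}{2v}$; observe that the two target ratios $\frac{6}{v+1}$ and $\frac{v^{2}+2v-3}{v^{2}-1}=\frac{v+3}{v+1}$ are exactly $\frac{3d}{v}/\mathrm{OPT}$ and $\frac{(v+3)d}{2v}/\mathrm{OPT}$. Hence it suffices to prove that for a suitable adversarial placement $\mathcal{E}\ge \frac{3d}{v}$ when $v\le 3$ and $\mathcal{E}\ge \frac{(v+3)d}{2v}$ when $v>3$.

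I would first \emph{grant} the robots full (wireless, two-way) communication. Since the S/R model is strictly more constrained --- the receiver can never initiate a long-range message --- any time lower bound obtained under full communication is \emph{a fortiori} valid in the S/R model. This removes the communication asymmetry from the analysis and reduces the problem to a purely kinematic one: unknown direction, unknown distance, two unit-speed robots sharing one speed-$v$ bike, both of whom must reach the exit. As a sanity check, at $v=1$ this recovers the classical wireless two-robot value $\frac{6}{v+1}=3$.

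Next I would set up the adversary. Until the exit is first visited the two robots have no information, so their trajectories coincide under the two placements $+d$ and $-d$; let $r(t)$ and $\ell(t)$ denote the rightward and leftward extents reached by time $t$. Because only one robot holds the bike at a time, the frontiers obey $\dot r+\dot\ell\le v+1$ (the biker pushing one frontier at speed $\le v$, the walker the other at speed $\le 1$) together with $\dot r,\dot\ell\le v$, and the exit at $+d$ (resp. $-d$) is discovered at $\tau_{+}=\min\{t:r(t)\ge d\}$ (resp. $\tau_{-}$). An adaptive adversary places the exit on the side discovered later, just beyond the frontier the ensemble is about to abandon; after discovery it lets the robots converge optimally, and I lower bound the convergence (second-robot arrival) time by a second application of Lemma~\ref{lm:bikeshare} to the residual gap between the robots' positions and the exit.

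Finally I would optimize the adversary over $d$ and side and compare the two candidate extremal strategies, which is where the threshold $v=3$ appears. For slow bikes ($v\le 3$) the binding configuration is the one in which the bike itself must sweep to distance $d$ on the wrong side, return, and reach $d$ on the correct side --- a path of length $3d$ traversed at speed $v$ --- while the walker's straight-line progress already suffices to meet it, forcing $\mathcal{E}\ge \frac{3d}{v}$. For fast bikes ($v>3$) the bottleneck is instead the unit-speed robot, which in the worst case is on the wrong side when the exit is found and must be ferried across by bike-sharing; balancing $\tau_{\pm}$ against this convergence time yields $\mathcal{E}\ge \frac{(v+3)d}{2v}$. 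Dividing each bound by $\mathrm{OPT}$ gives the two cases of Theorem~\ref{theorem3-lb}. The main obstacle is the convergence step: one must show, over all bike-sharing schedules and all ways of splitting the two directions between the two robots, that the second robot cannot reach the exit faster than claimed, where the inequality $\dot r+\dot\ell\le v+1$ and the optimality clause of Lemma~\ref{lm:bikeshare} do the heavy lifting; additional care is needed to verify that the adaptive adversary can always realize the ``wrong-side'' configuration regardless of how the algorithm assigns who rides and who walks.
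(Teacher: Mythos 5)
Your high-level plan is the same as the paper's: normalize by the offline optimum $\frac{d(v+1)}{2v}$ from Lemma~\ref{lm:bikeshare}, force a configuration in which the ensemble has swept distance $d$ in the wrong direction, lower bound the remaining convergence time via bike-sharing, and locate the threshold at $v=3$; your $v\le 3$ case (the bike itself must traverse a $3d$ path at speed $v$) is exactly the paper's Case~2. The gap is in the $v>3$ case, which is where the one substantive computation of the proof lives, and which you both misdescribe and never perform. In the paper's binding configuration the unit-speed robot is on the \emph{correct} side: at time $\frac{d}{v}$ the biker sits at $-d$ while the walker is at $+\frac{d}{v}$ already heading toward the exit at $+d$; the bound comes from the biker riding right, catching the walker at position and time $\frac{2d}{v-1}$, and the pair then bike-sharing the residual segment of length $\frac{d(v-3)}{v-1}$, giving $\frac{2d}{v-1}+\frac{d(v-3)(v+1)}{2v(v-1)}=\frac{d(v^{2}+2v-3)}{2v(v-1)}=\frac{(v+3)d}{2v}$. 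Your described bottleneck --- the walker on the wrong side, ``ferried across'' after the exit is found --- is a different geometry, and carrying it out does not produce the theorem's expression: with the finder and bike at the exit and the walker at $-\frac{d}{v}$, the optimal drop-off calculation (drop the bike at distance $\frac{d}{2}-\frac{d}{2v}$ from the exit) yields $\frac{d(v^{2}+4v+1)}{2v^{2}}$, not $\frac{(v+3)d}{2v}$; and one cannot simply adopt that larger value as the lower bound, because an algorithm in which the walker trails the biker (as in Algorithm~\ref{algorithm3}) avoids that configuration entirely --- which is precisely the ``for all algorithms'' step you flag but do not carry out. In effect you back-solve the two target times $\frac{3d}{v}$ and $\frac{(v+3)d}{2v}$ from the desired ratios and assert that balancing will deliver them; the missing meeting-point and bike-share computation is the core of the proof.

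On rigor, it is worth noting that your framework is in one respect more careful than the paper's own proof, which simply declares the biker-at-$-d$ scenario to be ``the worst case scenario'' without quantifying over algorithms: your frontier formalism $r(t),\ell(t)$ with $\dot r+\dot\ell\le v+1$, and your explicit worry about how ride/walk roles are assigned, identify exactly what a fully rigorous argument would need --- but you do not use the frontier inequality anywhere in deriving the bounds, so neither write-up closes that hole. Your preliminary reduction to the full-wireless model is sound and consistent with the paper, whose lower-bound argument never exploits the S/R restriction.
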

 \section{Conclusion}
\label{sec:Conclusion}

The purpose of our investigations was to study the limits of communication for robots with mixed faults. In particular, in this paper we gave evacuation algorithms and studied the upper and lower bounds for evacuation in the S/R communication model where the faults involve a sender and a receiver seeking to evacuate through an unknown exit situated on an infinite line. 
In particular, in Section~\ref{sec:Lower Bound} we presented a lower bound and in subsection~\ref{subsec:alg123} we presented the asymptotic behaviour of the main algorithm. However the possibility of improving the lower bound remains an open problem. 
An extension of the problem already analyzed could also be considered for the case of multiple (more than two) autonomous robots in a setting where the autonomous robots may suffer any of S/R, crash, and byzantine faults. One could also consider the case of multiple non-autonomous robots. The possibility of studying other search domains (geometric star graph, disk, polygon, etc) would also be interesting.

\bibliographystyle{abbrv}
\bibliography{refs}

\newpage

\appendix

\section{Proof of Lemma~\ref{lm:bikeshare}}

\begin{proof} (Lemma~\ref{lm:bikeshare})
Based on Fig. \ref{fig1}, both robots are initially situated at the origin which is at distance $d$ from the exit. 
\begin{figure}[h]  
\begin{center}
\epsfig{width=11cm,file=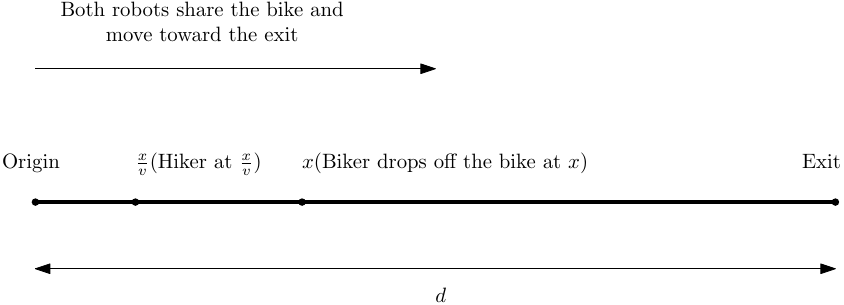}
\end{center}
\caption{Graph depicts the movement of two robots sharing the bike and starting at the origin in order to cover distance $d$. }
       \label{fig1}
\end{figure} 
The objective of both robots is to ride share the bike and arrive at the exit at the same time. If we assume the first robot drops off the bike at distance $x$ then we should have the following:
\begin{align*}
&\frac{x}{v}+d-x=x+\frac{d-x}{v}.
\end{align*}
Solving the last equation in $x$ we derive that $x=\frac d2$. Therefore the time it takes for the two robots to find the Exit is $\frac{x}{v}+d-x=\frac{d (v+1)}{2 v}$. For a proof of the lower bound see~\cite{jawharkranakis21}.
\qed
\end{proof}
 \section{Proof of Theorem ~\ref{theorem3-ub}}
\begin{proof} In this algorithm the sender uses a doubling strategy with maximum speed $v$. The receiver will follow the sender but will move $\frac{2^k}{v}$ in each iteration instead of $2^k$. The sender will reach the exit first then it will communicate with the receiver to proceed to the exit. The sender will go back distance $\frac{d}{2}-\frac{d}{2v}$ to drop off the bike so that the receiver can pick it up on its way to the exit. We will justify why the sender needs to move $\frac{d}{2}-\frac{d}{2v}$ after reaching the exit to leave the bike for the receiver. After the sender reaches the exit, there is no benefit to stay at the exit with the bike since the receiver which is moving with unit speed can benefit from the bike to reach the exit faster. 

The key to finding the distance $x$ which is the distance between the exit and the point where the bike is dropped off is to have the sender drop it off at a point such that when it goes back to the exit it will reach the exit at the same time as the receiver. 
\begin{figure}[H]
\begin{center}
\epsfig{width=10cm,file=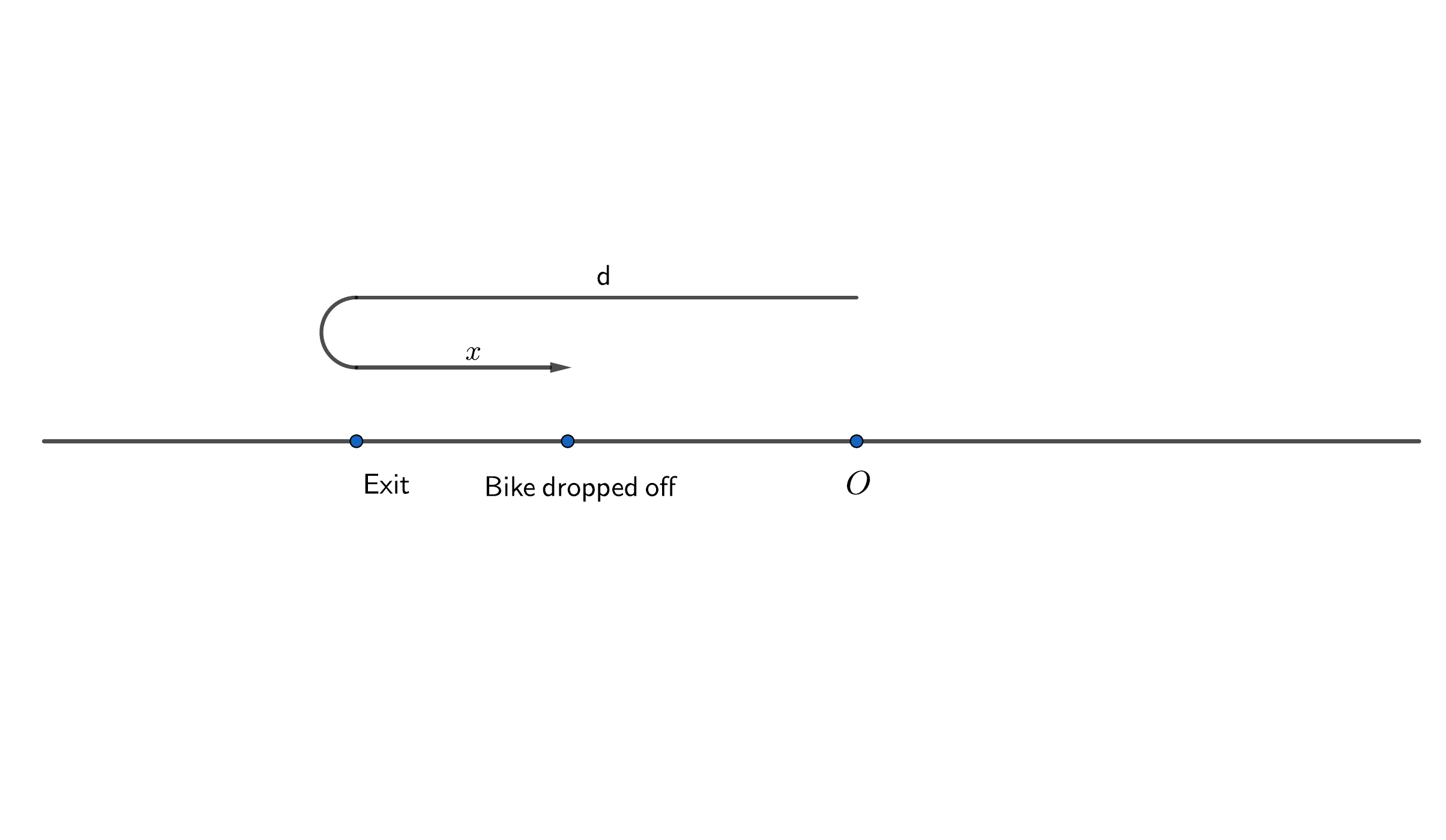}
\end{center}
  \caption{Graph depicts the minimal distance $x$ away from the origin at which the bike is dropped off by the sender and allows the receiver to take the bike and reach the exit sooner.}
  \label{diagramForAlgorithm3}
\end{figure} 
Based on Fig. \ref{diagramForAlgorithm3}, if we recall that $d$ is the distance from the origin to the exit and $x$ is the distance from the exit to the point where the sender drops off the bike, then we have the following:
\begin{align*}
    d-x+\frac{x}{v}=\frac{d}{v}+\frac{x}{v}+x
    &\implies x=\frac{d}{2}-\frac{d}{2v}\\
\end{align*} 
This will guarantee that when the sender drops off the bike at distance $x$, it will reach the exit at the same time as the receiver. 

Assume that the exit is found during the $k^{th}$ iteration, then  $2^{k-2} < d \leq2^k$. We can calculate the evacuation time, denoted by ${\cal E}$, as follows:
\begin{align*}
    {\cal E}&=\frac{2\cdot 2^0}{v}+\frac{2\cdot 2^1}{v}+\dots+\frac{2\cdot 2^{k-1}}{v}+d-x+\frac{x}{v}\\ 
    &=\frac{2(2^k-1)}{v}+d-x+\frac{x}{v}\\
    &=\frac{2^{k+1}}{v}-\frac{2}{v}+d-\frac{d}{2}+\frac{d}{2v}+\frac{d}{2v}-\frac{d}{2v^2} \\
    &=\frac{2^{k+1}}{v}-\frac{2}{v}+\frac{d}{2}+\frac{d}{v}-\frac{d}{2v^2}
\end{align*}

Combining similar terms above we obtain
\begin{align*}    
{\cal E} &\leq 2^3 \cdot \frac{2^{k-2}}{v}-\frac{2}{v}+\frac{d}{2}+\frac{d}{v}-\frac{d}{2v^2}\\
    &\leq \frac{8d}{v}-\frac{2}{v}+\frac{d}{2}+\frac{d}{v}-\frac{d}{2v^2}\\
    &\leq \frac{9d}{v}+\frac{d}{2}-\frac{d}{2v^2}-\frac{2}{v}\\
    &\leq \frac{9d}{v}+\frac{d}{2}-\frac{d}{2v^2}
\end{align*}
Thus, using Lemma~\ref{lm:bikeshare}, the resulting competitive ratio will be upper bounded by the quantity $\left(\frac{2 v}{v+1}\right) \left(\frac{9}{v}+\frac{1}{2}-\frac{1}{2v^2}\right).$
This completes the proof of Theorem~\ref{theorem3-ub}. \qed
\end{proof}
\section{Proof of Theorem~\ref{theorem3-lb}}

\begin{proof} (Theorem~\ref{theorem3-lb})
The Exit is placed at one of the endpoints $\pm d$ while the robots and the bike start at the origin $O$.  
\newline In order to find the lower bound we consider the worst case scenario which is achieved when the biker takes the bike and reaches $-d$ and the exit is at $d$. Depending on the speed of the biker, there are two cases to consider as follows:

\paragraph{Case 1:} $v\geq 3$

As we can see in Fig.~\ref{fig4}, if the biker is fast then it can reach the exit before the other robot and this occurs if the speed $v\geq 3$. 
\begin{figure}
\begin{center}
\epsfig{width=11cm,file=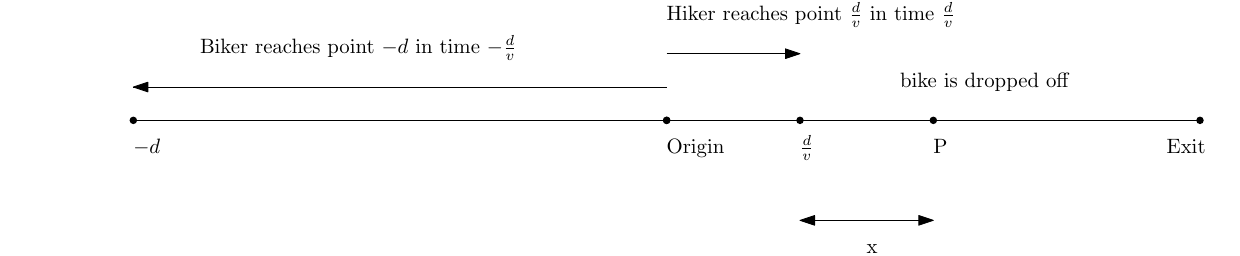}
\end{center}
\caption{ To achieve the optimal lower bound, both robots move in opposite direction. When the biker reaches $-d$ in time $\frac{d}{v}$, the adversary places the exit at $d$. At time $\frac{d}{v}$, the hiker would be at point $\frac{d}{v}$.}
  \label{fig4}
\end{figure} 
In this case the biker needs to share the bike with the other robot. The closest point $P$ to the exit that the other robot can reach is $\frac{d}{v}$ to the right of the origin. In order to find the evacuation time, we assume that the biker meets the other robot at distance $x$ to the right of point $P$. In order to achieve minimal evacuation time, we have the following:
\begin{align*}
        &\frac{d}{v}+\frac{d}{v^2}+\frac{x}{v}+d-x-\frac{d}{v}=x+\frac{1}{v} (d-\frac{d}{v}-x)\\
        &\implies \frac{d}{v}+\frac{d}{v^2}+\frac{x}{v}+d-\frac{d}{v}-x=x+\frac{d}{v}-\frac{d}{v^2}-\frac{x}{v}\\
        &\implies 2 x-\frac{2 x}{v}=\frac{2 d}{v^2}+d-\frac{d}{v}\\
       &\implies x=\frac{2 d+d v^{2}-d v}{2 v (v-1)}
\end{align*}

As a consequence, if ${\cal E}$ denotes the evacuation time then we have
\begin{align*}
{\cal E}&=\frac{d}{v}+x+\frac{d}{v}-\frac{x}{v}-\frac{d}{v^2}\\
       &=\frac{d}{v}+\frac{d (v^2-v+2)}{2 v (v-1)}+\frac{d}{v}-\frac{d}{v^{2}}-\frac{d v^2-d v+2 d}{2 v^{3}-2 v^{2}}\\
       &=\frac{d(v^{2}+2 v-3)}{2 v (v-1)}
\end{align*}

\paragraph{Case 2:} $v\leq 3$ 

In this case the biker cannot arrive to the exit before the other robot and therefore the minimal evacuation time will be $\frac{3d}{v}$   
  
Using Lemma~\ref{lm:bikeshare}, we conclude that a lower bound on the competitive ratio, denoted by $CR$, will satisfy the following inequalities:
 \paragraph{Case 1}: $v\geq 3$
 \begin{align}
 \label{eq:lb1}
 \frac{v^{2}+2 v-3}{v^{2}-1}\leq \frac{d (v^{2}+2 v-3)}{2 v (v-1)} \frac{2 v}{d (v+1)}\leq CR
 \end{align}
 
\paragraph{Case 2}: $v\leq 3$
\begin{align}
\label{eq:lb2}
\frac{6}{v+1} \leq  \frac{3 d}{v}  \frac{2 v}{d (v+1)}\leq CR 
 \end{align}
 
The lower bound follows from Inequalities~\eqref{eq:lb1}~and~\eqref{eq:lb2}. 
 \qed
 \end{proof}
\end{document}